\begin{document}

\title*{Ground states for Potts model with a countable set of spin values on a Cayley tree}
\author{G.I.Botirov and M.M.Rahmatullaev}
\institute{G.I.Botirov \at Institute of Mathematics,
Tashkent, Uzbekistan, \email{botirovg@yandex.ru}
\and M.M.Rahmatullaev \at Institute of Mathematics,
Tashkent, Uzbekistan, \email{mrahmatullaev@rambler.ru}}
%
%
\maketitle

\abstract*{We consider Potts model, with competing interactions and countable spin values $\Phi=\{0,1,\dots \}$ on a Cayley tree of order three. We study periodic ground states for this model.}

\abstract{We consider Potts model, with competing interactions and countable spin values $\Phi=\{0,1,\dots \}$ on a Cayley tree of order three. We study periodic ground states for this model.}

\section{Introduction}
\label{sec:1}
Each Gibbs measure is associated with a single phase of physical system. As is known, the phase diagram of Gibbs measures for a Hamiltonian is close to the phase diagram of isolated (stable) ground states of this Hamiltonian. At low temperatures, a periodic ground state corresponds to a periodic Gibbs measures, see \cite{rm}-\cite{rm2013}. The problem naturally aries on description of periodic ground states. In \cite{r2006} and \cite{rm2010} for the Ising model with competing interactions, periodic and weakly periodic ground states were studied.

In \cite{rm} ground states were described and the Peierls condition for the Potts model is verified. Using a contour argument authors showed the existence of three different Gibbs measures associated with translation invariant ground states.

In \cite{br}, (\ref{H}), \cite{rras} studying periodic and weakly periodic ground states for the Potts model with competing interactions on a Cayley tree.
In the present paper, we consider Potts model, with competing interactions and a countable set of spin values $\Phi=\{0,1,\dots \}$ on a Cayley tree of order three. We study periodic ground states.

The paper is organized as follows. In Section 2, we recall the main definitions and known facts. In Section 3, we study periodic ground states.

\section{Main definitions and known facts}
\textbf{Cayley tree.} The Cayley tree (Bethe lattice) $\Gamma^k$ of order $k \geq 1$ is an infinite tree, i.e., a graph
without cycles, such that exactly $k+1$ edges originate from each vertex (see \cite{B}). Let $\Gamma^k=(V,L)$ where
$V$ is the set of vertices and $L$ the set of edges. Two vertices $x$ and $y$ are called {\it nearest neighbors}
if there exists an edge $l \in L$ connecting them and we denote $l=\langle x,y\rangle $.

On this tree, there is a natural distance to be denoted $d(x; y)$, being the number of
nearest neighbor pairs of the minimal path between the vertices $x$ and $y$ (by path one
means a collection of nearest neighbor pairs, two consecutive pairs sharing at least a
given vertex).

For a fixed $x^0 \in V$, the root, let
$$W_n=\{x \in V: d(x,x^0)=n\}, \ \ \ \ V_n=\{x \in V : d(x,x^0) \leq n\};$$
be respectively the sphere and the ball of radius $n$ with center at $x^0$, and for $x \in W_n$ let
$$S(x)=\{y \in W_{n+1}: d(x,y)=1\}$$
be the set of direct successors of $x$

It is well-known that there exists a one-to-one correspondence between the set $V$ of vertices
of the Cayley tree of order $k \geq 1$ and the group $G_k$ of the free products of $k + 1$ cyclic
groups of second order with generators $a_1, a_2, \dots , a_{k+1}$ (see \cite{g}, \cite{r}).

\section{Configuration space and the model}
\label{sec:2}
For each $x\in G_k$, let $S(x)$ denote the set of direct successors of $x$, i.e., if $x\in W_n$ then
$$
S(x)=\{y\in W_{n+1}:d(x,y)=1\}.
$$

For each $x\in G_k$, let $S_1(x)$ denote the set of all neighbors of $x$, i.e. $S_1(x)=\{y\in G_k: \langle x,y\rangle\in L\}.$ The set $S_1(x)\setminus S(x)$ is a singleton. Let $x_\downarrow$ denote the (unique) element of this set.

We consider the models in which the spin takes values in the set $\Phi=\{1,2, \dots\}$. A {\it configuration} $\sigma$ on the set $V$ is then defined as a function $x \in V \rightarrow \sigma(x) \in \Phi$; the set of all configurations coincides with $\Omega=\Phi^V$.

Let $G_k^*$ be a subgroup of index $r \geq 1$. Consider the set of right coset $G_k / G_k^*=\{H_1, \dots, H_r\}$, where $G_k^*$ is a subgroup.

\begin{definition} A configuration $\sigma(x)$ is said to be $G_k^*$ - {\it periodic } if $\sigma(x)=\sigma_i$ for all $x \in H_i$. A $G_k$-periodic configuration is said to be {\it translation invariant}.
\end{definition}

The period of a periodic configuration is the index of the corresponding subgroup.

\begin{definition} A configuration $\sigma(x)$ is said to be $G_k^*$ - {\it weakly periodic } if $\sigma(x)=\sigma_{ij}$ for all $x \in H_i$ and $x_\downarrow\in H_j$.
\end{definition}
The Hamiltonian of the Potts model with competing interactions has the form

\begin{equation}\label{H} H(\sigma)=J_1 \sum \limits_{\langle x,y\rangle: \atop x, y \in V} \delta_{\sigma(x)\sigma(y)} +J_2  \sum \limits_{x,y \in V: \atop d(x, y)=2} \delta_{\sigma(x)\sigma(y)},
\end{equation}
where $J_1, J_2\in \mathbf{{R}}$ and
$$\delta_{u v}=\left\{\begin{array}{ll} 1, \ \ \textrm{если} \ \
 u=v,\\
0,\ \ \textrm{если} \ \ u\ne v.\\
\end{array}\right.
$$

\section{Ground states}
For pair of configurations $\sigma$ and $\varphi$ coinciding almost everywhere, i.e., everywhere except at a finite number of points, we consider the relative Hamiltonian $H(\sigma, \varphi)$ of the difference between the energies of the configurations $\sigma$ and $\varphi$, i.e.,
\begin{equation}\label{3.1}
H (\sigma, \varphi) =J_1\sum\limits _ {\langle x, y
\rangle, \atop x, y \in V} (\delta _ {\sigma (x) \sigma
(y)}-\delta _ {\varphi (x) \varphi (y)}) +J_2\sum\limits _ {x, y
\in V: \atop d (x, y) =2} (\delta _ {\sigma (x) \sigma (y)}
-\delta _ {\varphi (x) \varphi (y)}),
\end{equation}
where $J=(J_1, J_2)\in\mathbf{R}^2$ is an arbitrary fixed parameter.

Let $M$ be the set of all unit balls with vertices in $V$. By the {\it restricted configuration} $\sigma_b$ we mean the restriction of a configuration $\sigma$ to a ball $b\in M$. The energy of a configuration $\sigma_b$ on $b$ is defined by the formula
\begin{equation}
U (\sigma_b) \equiv U (\sigma_b, J) = \frac {1} {2} J_1\sum\limits _ {\langle x, y
\rangle, \atop x, y \in b} \delta _ {\sigma (x) \sigma (y)}
+J_2\sum\limits _ {x, y \in b:\atop d (x, y) =2} \delta _ {\sigma
(x) \sigma (y)}, \end{equation} where $J=(J_1, J_2)\in\mathbf{R}^2$.

The following assertion is known (see \cite{r}-\cite{rras})
\begin{lemma}
The relative Hamiltonian (\ref{3.1}) has the form
$$H (\sigma, \varphi) = \sum\limits _
{b\in M} (U (\sigma_b)-U (\varphi_b)) .$$
\end{lemma}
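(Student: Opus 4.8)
The plan is to establish the identity by an exchange-of-summation argument, counting for each nearest-neighbor edge and each distance-two pair how many unit balls contain it. First I would observe that since $\sigma$ and $\varphi$ coincide everywhere except on a finite set of vertices, only finitely many of the differences $\delta_{\sigma(x)\sigma(y)}-\delta_{\varphi(x)\varphi(y)}$ are nonzero. Hence both sides of the claimed identity are in fact finite sums, and every rearrangement of terms performed below is legitimate.

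Next I would analyze the combinatorial structure of a unit ball $b\in M$ on $\Gamma^k$. Such a ball consists of a center $c$ together with its $k+1$ nearest neighbors. The edges $\langle x,y\rangle$ with $x,y\in b$ are precisely the $k+1$ edges joining $c$ to its neighbors, while the pairs $x,y\in b$ with $d(x,y)=2$ are precisely the pairs of distinct neighbors of $c$, since two neighbors of a common vertex lie at distance two on a tree. Thus $U(\sigma_b)$ is assembled from exactly the edges incident to $c$ and the distance-two pairs whose midpoint is $c$.

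The key step is the multiplicity count. A given nearest-neighbor edge $\langle x,y\rangle$ is contained in exactly two unit balls, namely the one centered at $x$ and the one centered at $y$; consequently, when the intra-ball edge terms $\tfrac{1}{2}J_1\delta_{\sigma(x)\sigma(y)}$ are summed over all $b\in M$, each edge is counted twice, and the factor $\tfrac{1}{2}$ restores the coefficient $J_1$ appearing in (\ref{3.1}). A given pair $x,y$ with $d(x,y)=2$ has a unique common neighbor $z$, by uniqueness of paths on a tree, so it belongs to exactly one unit ball, the one centered at $z$; hence the distance-two terms carry the coefficient $J_2$ in both expressions.

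Substituting the definition of $U$ into $\sum_{b\in M}\bigl(U(\sigma_b)-U(\varphi_b)\bigr)$ and interchanging the order of summation by means of these multiplicities then reproduces (\ref{3.1}) term by term, which completes the proof. The one point demanding care is precisely this multiplicity bookkeeping, namely verifying that each edge is shared by exactly two balls while each distance-two pair belongs to exactly one ball, since this is what reconciles the factor $\tfrac{1}{2}$ in the definition of $U$ with its absence on the $J_2$ term.
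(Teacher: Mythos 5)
Your proof is correct and complete. Note that the paper itself offers no proof of this lemma: it is stated as a known fact with a citation to \cite{r}--\cite{rras}, so there is no in-paper argument to compare against. Your double-counting is exactly the standard argument behind the cited result: each nearest-neighbor edge lies in precisely the two balls centered at its endpoints (a third ball would require a common neighbor of two adjacent vertices, i.e.\ a triangle, impossible on a tree), which cancels the factor $\tfrac{1}{2}$ in the definition of $U$; each distance-two pair lies in precisely the ball centered at its unique midpoint, so the $J_2$ term carries multiplicity one; and the almost-everywhere agreement of $\sigma$ and $\varphi$ makes both sides finite sums (only the finitely many balls meeting the disagreement set contribute), so the rearrangement is legitimate. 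You correctly identified the multiplicity bookkeeping as the one point demanding care, and you verified it.
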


Note that, in \cite{br} in the case $k=2$ and $\Phi=\{1,2,3\}$ all periodic (in particular translation-invariant) ground states for the Potts model (\ref{H}) are given. In \cite{rm2013} the set of weakly periodic ground states corresponding to index-two normal divisors of the group representation of the Cayley tree is given. In \cite{rras} the sets of periodic and weakly periodic ground states corresponding to normal subgroups of the group representation of the Cayley tree of index 4 are described.

We consider the case $k=3$. It is easy to see that $U(\sigma_b)\in \{U_1,U_2,..., U_{12}\}$ for any $\sigma_b$, where
$$ U_1=2J_1+6J_6, \ \ U_2=\frac{3}{2}J_1+3J_2, \ \ U_3=J_1+2J_2, \ \  U_4=\frac{1}{2}J_1+3J_2,$$
$$ U_5=6J_2, \ \ U_6=\frac{1}{2}J_1, \ \ U_7=3J_2,  \ \ U_{8}=J_2,$$
$$ U_9=J_1+J_2, \ \ U_{10}=\frac{1}{2}J_1+J_2, \ \  U_{11}=2J_2, \ \ U_{12}=0.$$

\begin{definition}
A configuration $\varphi$ is called a {\it ground state} of the relative Hamiltonian $H$ if $U(\varphi_b)=\min \{U_1, U_2,...,U_{12}\}$ for any $b\in M$.
\end{definition}

We set $C_i=\{\sigma_b:U(\sigma_b)=U_i\}$ and $U_i(J)=U(\sigma_b, J)$ if $\sigma_b\in C_i, i=1,2,...,12.$

If a ground state is a periodic (weakly periodic, translation invariant) configuration then we call it a {\it periodic (weakly periodic, translation invariant) ground state.}

Let $$A\subset \{1,2,...,k+1\}, \ \ H_{A}=\{x\in G_k: \sum_{j\in
A}w_j(x)\mbox{ is even}\},$$

 $$G_k^{(2)}=\{x\in G_k: |x|\mbox{ is even}\},
G_k^{(4)}=H_{A}\cap G_k^{(2)},$$ where $w_j(x)$ is the number of occurrences of
$a_j$ in $x$ and $|x|$ is the length of $x$, i.e.
$|x|=\sum_{j=1}^{k+1}w_j(x)$. Notice that $G_k^{(4)}$ is a normal subgroup of index 4 of  $G_k$.

Then we have $$G_k^{(4)}=\{x\in G_k:
|x|\mbox{ is even},\sum_{j\in A}w_j(x)\mbox{ is even}\}.$$ If $A=\{1,2,...,k+1\}$ then the normal subgroup $H_{A}$ coincides with the group $G_k^{(2)}$.

For any $i=1,2,...,12$ we put
\begin{equation}\label{u}
A_i=\{J\in\mathbf{R}^2: U_i=\min\{U_1, U_2,..., U_{12}\}\}.
\end{equation}

Quite cumbersome but not difficult calculations show that
$$
A_{1}=\{J\in \mathbf{R}^2: J_1\leq 0, J_2\leq 0\}\cup\{J\in \mathbf{R}^2: J_1\leq -6J_2, J_2\geq 0\},
$$
$$
A_{2}=\{J\in \mathbf{R}^2: J_1\geq0, -6J_2\leq J_1\leq-4J_2\}, $$$$ A_{3}=A_4=A_{10}=\{J\in \mathbf{R}^2: J_1=0, J_2=0\},
$$
$$
A_{5}=\{J\in \mathbf{R}^2: J_1\geq 0, J_2 \leq 0\},$$$$ A_{6}=\{J\in \mathbf{R}^2: J_2\geq0, -2J_2\leq J_1\leq 0\},
$$
$$
A_{7}=A_8=A_{11}=\{J\in \mathbf{R}^2: J_1\geq 0, J_2=0\},$$$$ A_{9}=\{J\in \mathbf{R}^2: J_2\leq 0, -4J_2\leq J_1\leq
-2J_2\},
$$
$$ A_{12}=\{J\in \mathbf{R}^2: J_1\leq 0, J_2\leq 0\},\quad \mbox{and} \quad \mathbf{R}^2 =\bigcup
\limits_{n} A_{n}.$$

\begin{theorem}\label{thm1}
For any class $C_i, i=1,2,...12,$ and any bounded configuration $\sigma_b\in C_i$, there exists a \textit{periodic configuration} $\varphi$ (on the Cayley tree) such that $\varphi_{b'}\in C_i$ for any $b'\in M$ and $\varphi_b=\sigma_b.$
\end{theorem}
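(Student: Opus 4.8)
The plan is to reduce the statement to a purely combinatorial classification of the balls and then to realise each pattern by a configuration that is constant on the cosets of a finite-index normal subgroup. First I would note that for $k=3$ a unit ball $b$ consists of its centre $x$ together with the four neighbours $xa_1,\dots,xa_4$, and that $U(\sigma_b)$ depends only on the \emph{equality pattern} of the five spins. Writing $p$ for the number of neighbours carrying the centre's value and $\lambda$ for the partition of the remaining $4-p$ neighbours into blocks of equal value, the edge term equals $\frac{1}{2}J_1 p$ and the distance-two term equals $J_2\left(\frac{p(p-1)}{2}+\sum_j\frac{\lambda_j(\lambda_j-1)}{2}\right)$. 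Running through $p\in\{0,1,2,3,4\}$ reproduces exactly the twelve values $U_1,\dots,U_{12}$ listed before the theorem; hence each class $C_i$ is precisely the set of restricted configurations with a prescribed pair $(p,\lambda)$, and it suffices to produce, for each such pair, a periodic configuration all of whose balls carry that pattern and which restricts to the given $\sigma_b$.

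The key construction uses the presentation $G_3=\langle a_1,\dots,a_4\mid a_j^2=e\rangle$. For an elementary abelian $2$-group $\Sigma=\mathbf{Z}_2^{m}$ and any choice $g_1,\dots,g_4\in\Sigma$ (each an involution or the identity), the assignment $a_j\mapsto g_j$ extends, by the universal property of the free product of cyclic groups of order two, to a homomorphism $\pi\colon G_3\to\Sigma$ whose kernel $G_3^{*}=\ker\pi$ is normal of finite index dividing $2^m$. Fixing distinct spin values $\{\sigma_g\}_{g\in\mathrm{Im}\,\pi}\subset\Phi$ and setting $\varphi(x)=\sigma_{\pi(x)}$ gives a $G_3^{*}$-periodic configuration. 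The point I would stress is that the type of \emph{every} ball is then the same and can be read off directly from $(g_1,\dots,g_4)$: the centre $x$ lies in the coset $\pi(x)$ and the neighbour $xa_j$ in $\pi(x)g_j$, so a neighbour shares the centre's value iff $g_j=e$, and two neighbours share a value iff the corresponding $g_j$ coincide. Thus $p$ is the number of indices with $g_j=e$ and $\lambda$ is the multiplicity partition of the non-identity $g_j$, independently of $x$.

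It then remains to match the prescribed pair $(p,\lambda)$ and the given $\sigma_b$. Since $\mathbf{Z}_2^{3}$ contains seven distinct non-identity involutions, one can always choose $g_1,\dots,g_4$ with exactly $p$ of them equal to $e$ and the non-identity ones grouped according to $\lambda$, which realises the class $C_i$ at every ball. To force $\varphi_b=\sigma_b$ for the ball $b$ centred at $x_b$, I would fix the labelling so that $g_j=e$ exactly when $\sigma_b$ gives $x_ba_j$ the centre value, and $g_i=g_j$ exactly when $\sigma_b$ gives $x_ba_i$ and $x_ba_j$ a common value; because $\sigma_b\in C_i$ has precisely the coincidence pattern $(p,\lambda)$, this is consistent. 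Two vertices of $b$ then lie in the same coset iff $\sigma_b$ assigns them equal values, so the rule $\sigma_{\pi(y)}:=\sigma_b(y)$ for $y\in b$ is well defined and injective on the cosets met by $b$; extending it by fresh distinct values on the remaining cosets keeps every ball of type $C_i$ while guaranteeing $\varphi_b=\sigma_b$.

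The main obstacle to watch is the interlocking of the local constraints: each vertex is the centre of one ball and a leaf of four others, so a pattern chosen independently at each centre need not be globally consistent. For example, the naive rule ``colour by the last generator'' forces the parent to repeat a child's colour and hence yields class $U_8$ rather than the rainbow class $U_{12}$. The homomorphism construction sidesteps this exactly because the ball type is governed by the \emph{fixed} tuple $(g_1,\dots,g_4)$ together with the single global rule ``distinct cosets receive distinct values,'' which holds simultaneously at all vertices. The one delicate point, which the coincidence-matching choice of $(g_1,\dots,g_4)$ is designed to prevent, is that matching $\sigma_b$ must not secretly force two distinct cosets to share a value and thereby spoil the type at distant balls; verifying this non-collapse is where I expect the argument to need the most care.
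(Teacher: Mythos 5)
Your proposal is correct, but it takes a genuinely different and more uniform route than the paper. The paper proves the theorem by twelve separate case analyses, exhibiting for each class an ad hoc subgroup: the index-two subgroups $H_{\{i\}}$, $H_{\{i,j\}}$, $H_{\{i,j,r\}}$, $G_k^{(2)}$ for $C_2$--$C_5$, the index-four subgroup $G_k^{(4)}$ for $C_7$, index-eight intersections for $C_6$ and $C_8$, kernels of homomorphisms onto the symmetric group $S_3$ (index six) for $C_{10}$ and $C_{11}$, and two subgroups of \emph{infinite} index for $C_9$ (the subgroup $\mathcal{H}_0$) and $C_{12}$ (the cyclic subgroup $\mathcal{U}=\{(a_1a_2)^n\}$), with several verifications delegated to \cite{rras} and \cite{rm2013}. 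You instead first prove the bijection between the classes $C_i$ and the coincidence patterns $(p,\lambda)$ --- which the paper merely asserts as ``easy to see'' --- and then realize every pattern by a single construction: an injective colouring of the cosets of $\ker\pi$ for a homomorphism $\pi\colon G_3\to \mathbf{Z}_2^m$ with $m\le 3$, the ball type being read off from the fixed tuple $(g_1,\dots,g_4)$ independently of the centre. Your matching step (choosing $(g_1,\dots,g_4)$ to mirror the coincidence pattern of the given $\sigma_b$, transporting its values to the cosets met by $b$, and using fresh distinct values elsewhere) correctly settles the ``non-collapse'' worry you raise at the end, since the transported labelling is well defined and injective exactly because $\sigma_b\in C_i$ has the prescribed pattern. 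Your route buys three things: one verification replaces twelve; all your configurations have finite period at most $8$, whereas for $C_9$ and $C_{12}$ the paper's configurations are periodic only with respect to infinite-index subgroups (``infinite period''), which does not literally fit the paper's own definition of periodicity via a subgroup of finite index $r$ and requires infinitely many spin values; and it avoids an actual defect in the paper's Case $C_{12}$ --- since $a_1a_2\in\mathcal{U}$, the right cosets satisfy $\mathcal{U}a_1=\mathcal{U}a_2$, so in the ball centred at $e$ the neighbours $a_1$ and $a_2$ lie in one coset and receive equal values, contradicting the claimed pairwise distinctness of neighbour values, while your $\mathbf{Z}_2^3$ colouring with four distinct non-identity images has no such coincidence. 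What the paper's route buys in exchange is the explicit catalogue of named configurations $\varphi^{(i)},\varphi_2^{(lm)},\dots$ that Theorem \ref{thm2} quotes verbatim, together with weakly periodic examples such as $\xi_7^{(lmn)}$; note, incidentally, that your finite-period ground states in the classes $C_9$ and $C_{12}$ do not appear in the lists of Theorem \ref{thm2}.B.5--B.6, which bears on the completeness of that later characterization, though not on the correctness of your proof of Theorem \ref{thm1}.
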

\begin{proof}

For an arbitrary class $C_i, i=1,2,...,12,$ and $\sigma_b\in C_i$, we construct the configuration $\varphi$ as follows: without loss of generality, we can take the ball centered at $e\in G_3$ (where $e$ is the unit element of $G_3$) for the ball $b$, i.e., $b=\{e, a_1, a_2, a_3, a_4\}$.

We consider several cases.

{\it Case $C_1$}. In this case, we have $\sigma (x)=i$, $i \in \Phi,$ for any $x\in b$. The configuration $\varphi $ hence coincides with the translation-invariant configuration, i.e. $\varphi^{(i)}=\{\varphi^{(i)}(x)=i\}$, where $i\in \Phi.$

{\it Case $C_2$}. This case is considered in \cite{rras}. Let $H_{\{i\}}$ be normal subgroup of index two, and $G_k/H_{\{i\}}=\{H^{(2)}_0, H^{(2)}_1\}$ is quotient group, for any $i\in \{1,2,3,4\}$.
For any $l, m\in \Phi, l\neq m$, considering the following $H_{\{i\}}-$periodic configuration:
$$
\varphi_2^{(lm)}(x)=\left\{%
\begin{array}{ll}
    l, & \textrm{if} \ \ {x \in H^{(2)}_0},\\
    m, & \textrm{if} \ \ {x \in H^{(2)}_1}. \\
\end{array}%
\right. $$
In \cite{rras} it was proved, that $(\varphi_2^{(lm)})_{b'}\in C_2$ for any $b'\in M$.

{\it Case $C_3$}. Let $H_{\{i,j\}},$ $i,j\in \{1,2,3,4\}$ and $i\neq j$ be a normal subgroup of index two, and $G_k/H_{\{i,j\}}=\{H^{(3)}_0, H^{(3)}_1\}$ be the quotient group.
For any $l, m\in \Phi, l\neq m$, consider the following $H_{\{i,j\}}-$ periodic configuration:
$$
\varphi_3^{(lm)}(x)=\left\{%
\begin{array}{ll}
    l, & \textrm{if} \ \ {x \in H^{(3)}_0},\\
    m, & \textrm{if} \ \ {x \in H^{(3)}_1}. \\
\end{array}%
\right. $$

So we thus obtain a periodic configuration $\varphi_3^{(lm)}$ with period $p=2$ (equal to the index of the subgroup); then by construction  $(\varphi_3^{(lm)})_b=\sigma_b$. Now we shall prove that all restrictions  $(\varphi_3^{(lm)})_{b'}$ for any $b'\in M$ of the configuration $\varphi_3^{(lm)}$ belong to $C_3$.

Let $q_j(x)=|S_1(x)\cap H^{(3)}_j|, j=0,1$; where $S_1(x)=\{y\in G_k:\langle x, y\rangle \},$ the set of all nearest neighbors of $x\in G_k.$ Denote $Q(x)=(q_0(x), q_1(x)).$ Clearly $q_0(x)$ (resp. $q_1(x)$) is the number of points $y$ in $S_1(x)$ such that $\varphi_3^{(lm)}(y)=l$ (resp. $\varphi_3^{(lm)}(y)=m$). We note (see Chapter 1 of \cite{r}) that for every $x\in G_k$ there is a permutation $\pi_x$ of the coordinates of the vector $Q(e)$ (where $e$ as before is the identity of $G_k$) such that
$$
\pi_x Q(e)=Q(x).
$$

Moreover $Q(x)=Q(e)$ if $x\in H_0^{(3)}$ and $Q(x)=(q_1(e), q_0(e))$ if $x\in H_1^{(3)}.$ Thus for any $b'\in M$ we have $(i)$ if $c_{b'}\in H_0^{(3)}$ (where $c_{b'}$ is the center of $b'$) then $(\varphi_3^{(lm)})_{b'}\in C_3$, $(ii)$ if $c_{b'}\in H_1^{(3)},$ then $(\varphi_3^{(lm)})_{b'}\in C_3.$

{\it Case $C_4$}. Let $H_{\{i,j,r\}}$ be a normal subgroup of index two, and $G_k/H_{\{i,j,r\}}=\{H^{(4)}_0, H^{(4)}_1\}$ is the quotient group, for any $i,j,r\in \{1,2,3,4\}$ and $i\neq j, i\neq r, j\neq r$.
For any $l, m\in \Phi,  l\neq m$, considering the following $H_{\{i,j,r\}}-$ periodic configuration:
$$
\varphi_4^{(lm)}(x)=\left\{%
\begin{array}{ll}
    l, & \textrm{if} \ \ {x \in H^{(4)}_0},\\
    m, & \textrm{if} \ \ {x \in H^{(4)}_1}. \\
\end{array}%
\right. $$

We thus obtain a periodic configuration $\varphi_4^{(lm)}$ with period $p=2$; it is clear that $(\varphi_4^{(lm)})_{b'}\in C_4$ for any $b'\in M.$

{\it Case $C_5$}. Let $G_k/G_k^{(2)}=\{H^{(5)}_0, H^{(5)}_1\}$ is quotient group. For any $l, m\in \Phi, l\neq m$, consider the following $G_k^{(2)}-$ periodic configuration:
$$
\varphi_5^{(lm)}(x)=\left\{%
\begin{array}{ll}
    l, & \textrm{if} \ \ {x \in H^{(5)}_0},\\
    m, & \textrm{if} \ \ {x \in H^{(5)}_1}. \\
\end{array}%
\right. $$

It is easy to see (see \cite{rras}) that for each $b'\in M$ we have $(\varphi_5^{(lm)})_{b'}\in C_5$.

{\it Case $C_6$}. Let $G_3^{(6)}=H_{i}\cap H_{j}\cap H_{r},$ for any $i,j,r\in\{1,2,3,4\}, i\neq j, i\neq r, j\neq r.$ We note (see \cite{r}) that $G_3^{(6)}$ is a normal index-eight subgroup in $G_3$, and $G_3/G_3^{(6)}=\{H^{(6)}_0, H^{(6)}_1,..., H^{(6)}_7 \}$ is quotient group, where\\
$H^{(6)}_0=G_3^{(8)}=\{x\in G_3: w_i(x) $ is even$, w_j(x) $ is even$, w_r(x) $ is even$\}$,\\
$H^{(6)}_1=\{x\in G_3: w_i(x) $ is even$, w_j(x) $ is even$, w_r(x) $ is odd$\}$,\\
$H^{(6)}_2=\{x\in G_3: w_i(x) $ is even$, w_j(x) $ is odd$, w_r(x) $ is even$\}$,\\
$H^{(6)}_3=\{x\in G_3: w_i(x) $ is even$, w_j(x) $ is odd$, w_r(x) $ is odd$\}$,\\
$H^{(6)}_4=\{x\in G_3: w_i(x) $ is odd$, w_j(x) $ is even$, w_r(x) $ is even$\}$,\\
$H^{(6)}_5=\{x\in G_3: w_i(x) $ is odd$, w_j(x) $ is even$, w_r(x) $ is odd$\}$,\\
$H^{(6)}_6=\{x\in G_3: w_i(x) $ is odd$, w_j(x) $ is odd$, w_r(x) $ is even$\}$,\\
$H^{(6)}_7=\{x\in G_3: w_i(x) $ is odd$, w_j(x) $ is odd$, w_r(x) $ is odd$\}$.\\

For given $\sigma_b$, we have
$$
\sigma_b(x)=\left\{%
\begin{array}{ll}
    l, & \textrm{if} \ \ {x \in H^{(6)}_0}\cup { H^{(6)}_7}\cap b,\\
    m, & \textrm{if} \ \ {x \in H^{(6)}_1}\cup {H^{(6)}_6}\cap b,\\
    n, & \textrm{if} \ \ {x \in H^{(6)}_2}\cup {H^{(6)}_5}\cap b,\\
    p, & \textrm{if} \ \ {x \in H^{(6)}_3}\cup {H^{(6)}_4}\cap b.\\
\end{array}%
\right. $$
 For any $l,m,n,p\in \Phi, l\neq m, l\neq n, l\neq p, m\neq n, m\neq p, n\neq p$, continue the bounded configuration $\sigma_b\in C_6$ to the entire lattice $\Gamma^3$ (which is denoted by $\varphi_6^{(lmnp)}$, (see Fig.1)) as

%
\begin{figure}[b]
\sidecaption
\includegraphics[scale=.75]{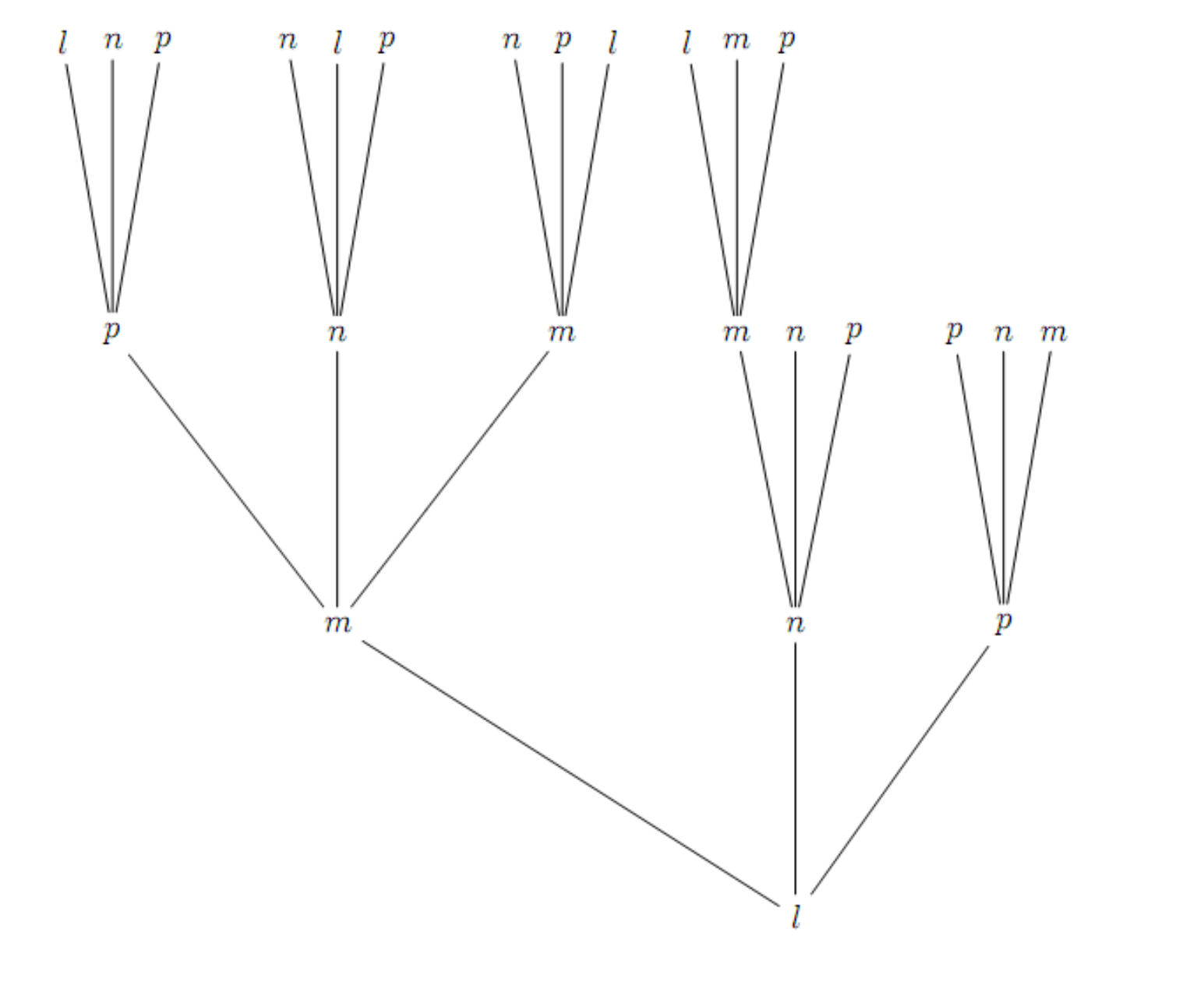}
%
%
\caption{Representation of the $G_k^{(8)}$ periodic configuration $\varphi_6^{(lmnp)}(x)$ on the Cayley tree of order $k=3$.}
\label{fig:1}       
\end{figure}

$$
\varphi_6^{(lmnp)}(x)=\left\{%
\begin{array}{ll}
    l, & \textrm{if} \ \ {x \in H^{(6)}_0\cup  H^{(6)}_7},\\
    m, & \textrm{if} \ \ {x \in H^{(6)}_1\cup  H^{(6)}_6},\\
    n, & \textrm{if} \ \ {x \in H^{(6)}_2\cup  H^{(6)}_5},\\
    p, & \textrm{if} \ \ {x \in H^{(6)}_3\cup  H^{(6)}_4}.\\
\end{array}%
\right. $$

{\it Case $C_7$}. Let $G_k/G_k^{(4)}=\{H^{(7)}_0,H^{(7)}_1,H^{(7)}_2,H^{(7)}_3\}$ be quotient group. In \cite{rras} it was proved that for $G_k^{(4)}-$ periodic configurations
$$
\varphi_7^{(lmn)}(x)=\left\{%
\begin{array}{ll}
    l, & \textrm{if} \ \ {x \in {H^{(7)}_0\cap H^{(7)}_1}},\\
    m, & \textrm{if} \ \ {x \in H^{(7)}_2},\\
    n, & \textrm{if} \ \ {x \in H^{(7)}_3},\\
\end{array}%
\right. $$
and
$$
\psi_7^{(lmn)}(x)=\left\{%
\begin{array}{ll}
    l, & \textrm{if} \ \ {x \in H^{(7)}_0},\\
    m, & \textrm{if} \ \ {x \in H^{(7)}_1},\\
    n, & \textrm{if} \ \ {x \in {H^{(7)}_2\cap H^{(7)}_3}},\\
\end{array}%
\right. $$
one has: $(\varphi_7^{(lmn)})_{b'}\in C_7$, $(\psi_7^{(lmn)})_{b'}\in C_7$ for all $l,m,n\in \Phi, l\neq m, l\neq n, m\neq n$ and for any $b'\in M$.

In \cite{rm2013} it was proved that $H_{\{1, 2, 3\}}-$ weakly periodic configurations
$$ \xi_7^{(lmn)}{(x)}=\left\{%
\begin{array}{ll}
    {l}, & \textrm{if} \ \ {x_{\downarrow} \in H_0}, \ x \in H_0 \\
    {m}, & \textrm{if} \ \ {x_{\downarrow} \in H_0}, \ x \in H_1 \\
    {n}, & \textrm{if} \ \ {x_{\downarrow} \in H_1}, \ x \in H_0 \\
    {l}, & \textrm{if} \ \ {x_{\downarrow} \in H_1}, \ x \in H_1, \\
\end{array}%
\right.$$
satisfy the following: $(\xi_7^{(lmn)})_{b'}\in C_7$ for all $l,m,n\in \Phi, l\neq m, l\neq n, m\neq n$ and for any $b'\in M$.

{\it Case $C_{8}$}. Let $G_3^{(8)}=H_{\{i, j\}}\cap H_{\{k\}}\cap H_{\{r\}},$ $i,j,k,r\in\{1,2,3,4\}, i\neq j, i\neq k, i\neq r, j\neq k, j\neq r, k\neq r.$ We note (see \cite{r}) that $G_3^{(8)}$ is a normal index-eight subgroup in $G_3$, and $G_3/G_3^{(8)}=\{H^{(8)}_0, H^{(8)}_1,..., H^{(8)}_7 \}$ is quotient group, where
$H^{(8)}_0=G_3^{(8)}=\{x\in G_3: w_i(x)+w_j(x) $ is even$, w_k(x) $ is even$, w_r(x) $ is even$\}$,\\
$H^{(8)}_1=\{x\in G_3: w_i(x)+w_j(x) $ is even$, w_k(x) $ is even$, w_r(x) $ is odd$\}$,\\
$H^{(8)}_2=\{x\in G_3: w_i(x)+w_j(x) $ is even$, w_k(x) $ is odd$, w_r(x) $ is even$\}$,\\
$H^{(8)}_3=\{x\in G_3: w_i(x)+w_j(x) $ is even$, w_k(x) $ is odd$, w_r(x) $ is odd$\}$,\\
$H^{(8)}_4=\{x\in G_3: w_i(x)+w_j(x) $ is odd$, w_k(x) $ is even$, w_r(x) $ is even$\}$,\\
$H^{(8)}_5=\{x\in G_3: w_i(x)+w_j(x) $ is odd$, w_k(x) $ is even$, w_r(x) $ is odd$\}$,\\
$H^{(8)}_6=\{x\in G_3: w_i(x)+w_j(x) $ is odd$, w_k(x) $ is odd$, w_r(x) $ is even$\}$,\\
$H^{(8)}_7=\{x\in G_3: w_i(x)+w_j(x) $ is odd$, w_k(x) $ is odd$, w_r(x) $ is odd$\}$.\\

In this case, for any $l,m,n,p\in \Phi, l\neq m, l\neq n, l\neq p, m\neq n, m\neq p, n\neq p$ we define the configuration $\varphi_{8}^{(lmnp)}$ as
$$
\varphi_{8}^{(lmnp)}(x)=\left\{%
\begin{array}{ll}
    l, & \textrm{if} \ \ {x \in H^{(8)}_0\cup  H^{(8)}_7},\\
    m, & \textrm{if} \ \ {x \in H^{(8)}_1\cup  H^{(8)}_6},\\
    n, & \textrm{if} \ \ {x \in H^{(8)}_2\cup  H^{(8)}_5},\\
    p, & \textrm{if} \ \ {x \in H^{(8)}_3\cup  H^{(8)}_4}.\\
\end{array}%
\right. $$

We thus obtain a periodic configuration $\varphi_{8}^{(lmnp)}$ with the period $p=8$ such that $(\varphi_{8}^{(lmnp)})_b=\sigma_b$, $(\varphi_{8}^{(lmnp)})_{b'}\in C_8$ for any $b'\in M.$

{\it Case $C_{9}$}. We consider a normal subgroup $\mathcal{H}_0\in G_3$ (see \cite{r}) of infinite index constructed as follows. Let the mapping $\pi_0:\{a_1, a_2, a_3, a_4\}\rightarrow \{e, a_1, a_2\}$ be defined by
$$
\pi_0(a_i)=\left\{%
\begin{array}{ll}
    a_i, & \textrm{if} \ \ {i=1, 2}\\
    e, & \textrm{if} \ \ {i\neq 1, 2}.\\
\end{array}%
\right. $$

Consider $$
f_0(x)=f_0(a_{i_1}a_{i_2}...a_{i_m})=\pi_0(a_{i_1})\pi_0(a_{i_2})...\pi_0(a_{i_m}).
$$

Then it is easy to see that $f_0$ is a homomorphism and hence $\mathcal{H}_0=\{x\in G_3: f_0(x)=e\}$ is a normal subgroup of infinity index.

Now we consider the factor group
$$
G_3/\mathcal{H}_0=\{\mathcal{H}_0, \mathcal{H}_0(a_1), \mathcal{H}_0(a_2), \mathcal{H}_0(a_1a_2), \mathcal{H}_0(a_2a_1),...\},
$$
where $\mathcal{H}_0(y)=\{x\in G_3:f_0(x)=y\}.$ We introduce the notations $$\mathcal{H}_n=\mathcal{H}_0(\underbrace{a_1a_2...}_n),  \mathcal{H}_{-n}=\mathcal{H}_0(\underbrace{a_2a_1...}_n).$$ In this notation, the factor group can be represented as
$$G_3/\mathcal{H}_0=\{... ,\mathcal{H}_{-2}, \mathcal{H}_{-2}, \mathcal{H}_0, \mathcal{H}_1, \mathcal{H}_2,...\}.$$

 It is known (see \cite{ri}), that for $x\in \mathcal{H}_{n}$ we have $|S_1(x)\cap \mathcal{H}_{n-1}|=1$, $|S_1(x)\cap \mathcal{H}_{n}|=k-1$, $|S_1(x)\cap \mathcal{H}_{n+1}|=1$.

Consider the following configuration
$$
\varphi^{(lm)}_{9}(x)=\left\{%
\begin{array}{ll}
    2nl, & \textrm{if} \ \ {x \in \mathcal{H}_n, n\neq 0},\\
    0, & \textrm{if} \ \ {x \in \mathcal{H}_0},\\
    (2n-1)m, & \textrm{if} \ \ {x \in \mathcal{H}_{-n}, n\neq 0},\\
\end{array}%
\right. $$
where $l, m\in \Phi, l\neq m$, $n=1,2,3....$

We thus obtain a periodic configuration $\varphi^{(lm)}_{9}$ with the infinity period, such that $(\varphi^{(lm)}_{9})_b=\sigma_b\in C_9$, and $(\varphi^{(lm)}_{9})_{b'}\in C_9$ for any $b'\in M$.

{\it Case $C_{10}$}. Let $S_3$ be the group of third-order permutations. We choose $\pi_0, \pi_1, \pi_2 \in S_3$ as
\begin{equation}
\pi_0=\left(
        \begin{array}{ccc}
          1 & 2 & 3 \\
          1 & 2 & 3 \\
        \end{array}
      \right),
\pi_1=\left(
        \begin{array}{ccc}
          1 & 2 & 3 \\
          1 & 3 & 2 \\
        \end{array}
      \right),
\pi_2=\left(
        \begin{array}{ccc}
          1 & 2 & 3 \\
          3 & 2 & 1 \\
        \end{array}
      \right).
\end{equation}
It is easily seen that $\pi_0=\pi_1^2=\pi_2^2$.

We consider the map $u:\{a_1,a_2,a_3,a_4\} \rightarrow \{\pi_1,\pi_2\}$
\begin{equation}
u(a_i)=\left\{
         \begin{array}{ll}
           \pi_1, & i=1, 2; \\
           \pi_2, & i=3, 4
         \end{array}
       \right.
\end{equation}
and assume that the function $f:G_3 \rightarrow S_3$ is defined as $$f(x)=f(a_{i_1}a_{i_2}\dots a_{i_n})=u(a_{i_1})\dots u(a_{i_n}).$$
Let
$$
\pi_3=\left(
        \begin{array}{ccc}
          1 & 2 & 3 \\
          3 & 1 & 2 \\
        \end{array}
      \right),
\pi_4=\left(
        \begin{array}{ccc}
          1 & 2 & 3 \\
          2 & 3 & 1 \\
        \end{array}
      \right),
\pi_5=\left(
        \begin{array}{ccc}
          1 & 2 & 3 \\
          2 & 1 & 3 \\
        \end{array}
      \right).
$$

We note (see \cite{gr}) that $H_{10}=\{x \in G_3: f(x)=\pi_0\}$ is a normal index-six subgroup. Let $G_3/H_{10}=\{\aleph_0, \dots, \aleph_5\}$ be the quotient group, where $$\aleph_i=\{x \in G_3: f(x)=\pi_i\}, i=\overline{0,5}.$$ In this case, we define the configuration $$\varphi_{10}^{(l,m,n)}(x)=\left\{
                                                                         \begin{array}{ll}
                                                                           l, & x \in \aleph_0 \cup \aleph_5, \\
                                                                           m, &  x \in \aleph_1 \cup \aleph_4, \\
                                                                           n, &  x \in \aleph_2 \cup \aleph_3,
                                                                         \end{array}
                                                                       \right.
$$
where $l,m,n\in \Phi, l\neq m, l\neq n, m\neq n.$

We thus obtain a periodic configuration $\varphi_{10}^{(l,m,n)}$ with the period six, such that $(\varphi_{10}^{(l,m,n)})_b=\sigma_b\in C_{10}$, $(\varphi_{10}^{(l,m,n)})_{b'}=\in C_{10}$ for any $b'\in M.$

{\it Case $C_{11}$}. Let $S_3$ be the group of third-order permutations.
It is easily seen that $\pi_0=\pi_1^2=\pi_5^2$.

We consider the map $u:\{a_1,a_2,a_3,a_4\} \rightarrow \{\pi_1,\pi_5\}$
\begin{equation}
u(a_i)=\left\{
         \begin{array}{ll}
           \pi_5, & i=1, 2; \\
           \pi_1, & i=3, 4,
         \end{array}
       \right.
\end{equation}
and assume that the function $f:G_3 \rightarrow S_3$ is defined as $$f(x)=f(a_{i_1}a_{i_2}\dots a_{i_n})=u(a_{i_1})\dots u(a_{i_n}).$$

We note (see \cite{gr}) that $H_{11}=\{x \in G_3: f(x)=\pi_0\}$ is a normal index-six subgroup. Let $G_3/H_{11}=\{\aleph_0, \dots, \aleph_5\}$ be the quotient group, where $$\aleph_i=\{x \in G_3: f(x)=\pi_i\}, i=\overline{0,5}.$$ In this case, we define the configuration $$\varphi_{11}^{(l,m,n)}(x)=\left\{
                                                                         \begin{array}{ll}
                                                                           l, & x \in \aleph_0 \cup \aleph_2,\\
                                                                           m, &  x \in \aleph_4 \cup \aleph_5, \\
                                                                           n, &  x \in \aleph_1 \cup \aleph_3,
                                                                         \end{array}
                                                                       \right.
$$
where $l,m,n\in \Phi, l\neq m, l\neq n, m\neq n.$

We thus obtain a periodic configuration $\varphi_{11}^{(l,m,n)}$ with the period six, such that $(\varphi_{11}^{(l,m,n)})_b=\sigma_b\in C_{11},$  $(\varphi_{11}^{(l,m,n)})_{b'}\in C_{11}$ for any $b'\in M.$

{\it Case $C_{12}$}.  Let $\mathcal{U}=\{(a_1a_2)^n\in G_3: n\in \mathbf{Z} \}.$ It is easy to see, that $\mathcal{U}$ is subgroup of the group $G_3$. Consider the set of right cosets $G_3/\mathcal{U}=\{\mathcal{U}, \mathcal{U}a_1,...,\mathcal{U}a_{k+1}, \mathcal{U}a_1a_2,...\}$ of $\mathcal{U}$ in $G_3$. We introduce the notations
$$
H_0=\mathcal{U}, H_1=\mathcal{U}a_1, ... ,H_{k+1}=\mathcal{U}a_{k+1}, H_{k+2}=\mathcal{U}a_1a_2,... .
$$

In this notation, the set of right coset can be represented as
$$
G_3/\mathcal{U}=\{H_0, H_1,...H_{k+1}, H_{k+2},...\}.
$$

Consider the following configuration: $\varphi^l_{12}(x)=l+i, $ if $x\in H_i$ for all $i=0,1,2,...$ and for any $l\in\Phi$. \\ Let $x\in H_n$, then $\varphi^l_{12}(x)=l+n$ and if $H_n=\mathcal{U}a_{j_1}a_{j_2}...a_{j_n}$, then for all $y\in S_1(x)$ we have $y\in \mathcal{U}a_{j_1}a_{j_2}...a_{j_n}a_t$, $t=1,2,3,4.$ By construction of configuration we have $\varphi^l_{12}(y)\neq \varphi^l_{12}(x)$ and $\varphi^l_{12}(y_1)\neq \varphi^l_{12}(y_2)$ for all $y, y_1, y_2\in S_1(x)$, $y_1\neq y_2.$

We thus obtain a $\mathcal{U}-$periodic configuration $\varphi^l_{12}$ with the infinity period, such that $(\varphi^l_{12})_{b'}\in C_{12}$ for any $b' \in M.$

\end{proof}

We set $B=A_1\cap A_2$, $B_0=A_1\cap A_5$, $B_1=A_2\cap A_9$, $B_2=A_9\cap A_6$, $B_3=A_6\cap A_{12}$, $\widetilde{A_1}=A_1\setminus (B\cup B_0)$, $\widetilde{A_2}=A_2\setminus (B_0\cup B_1)$, $\widetilde{A_5}=A_5\setminus (B_0\cup A_7)$, $\widetilde{A_6}=A_6\setminus (B_2\cup B_3)$, $\widetilde{A_9}=A_9\setminus (B_1\cup B_2)$ and $\widetilde{A_{12}}=A_{12}\setminus (B_3\cup A_7).$ Let $GS(H)$ be the set of all ground states, and let $GS_p(H)$ be the set of all periodic ground states.

\begin{theorem}\label{thm2}
A. If $J=(0,0)$, then $GS(H)=\Omega.$

B. 1. If $J\in \widetilde{A_1}$, then $GS_p(H)=\{\varphi^{(i)}: i\in \Phi\}.$\\
2. If $J\in \widetilde{A_2}$, then $GS_p(H)=\{\varphi_2^{(lm)}: l, m\in \Phi, l\neq m\}.$\\
3. If $J\in \widetilde{A_5}$, then $GS_p(H)=\{\varphi_5^{(lm)}: l, m\in \Phi, l\neq m\}.$\\
4. If $J\in \widetilde{A_6}$, then $GS_p(H)=\{\varphi_6^{(lmnp)}: l,m,n,p\in \Phi, l\neq m, l\neq n,l\neq p, m\neq n, \\ m\neq p, n\neq p\}.$\\
5. If $J\in \widetilde{A_9}$, then $GS_p(H)=\{\varphi_9^{(lm)}: l,m\in \Phi, l\neq m\}.$\\
6. If $J\in \widetilde{A_{12}}$, then $GS_p(H)=\{\varphi_{12}^{l}: l\in \Phi\}.$

C. 1. If $J\in B\setminus \{(0,0)\}$, then $GS_p(H)=\{\varphi^{(i)}, \varphi_2^{(lm)}: i, l, m\in \Phi, l\neq m\}$.\\
2. If $J\in B_0\setminus \{(0,0)\}$, then $GS_p(H)=\{\varphi^{(i)}, \varphi_5^{(lm)}: i, l, m\in \Phi, l\neq m\}$.\\
3. If $J\in B_1\setminus \{(0,0)\}$, then $GS_p(H)=\{\varphi_2^{(lm)}, \varphi_9^{(lm)}: i, l, m\in \Phi, l\neq m\}$.\\
4. If $J\in B_2\setminus \{(0,0)\}$ then $GS_p(H)=\{\varphi_6^{(lmnp)}, \varphi_9^{(lm)}: l,m,n,p\in \Phi, l\neq m, l\neq n,l\neq p, m\neq n,  m\neq p, n\neq p\}.$\\
5. If $J\in B_3\setminus \{(0,0)\}$ then $GS_p(H)=\{\varphi_6^{(lmnp)}, \varphi_{12}^{l}: l,m,n,p\in \Phi, l\neq m, l\neq n,l\neq p, m\neq n,  m\neq p, n\neq p\}.$\\
6. If $J\in A_8$, then periodic configuration $ \varphi_5^{(lm)}, \xi_7^{(lmn)}{(x)}, \psi_7^{(lmn)}(x), \varphi_{8}^{(lmnp)}(x), \varphi_{12}^{l}$ are periodic ground states, and weakly periodic configuration $\xi_7^{(lmn)}{(x)}$ is weakly periodic ground state, where $l,m,n,p\in \Phi, l\neq m, l\neq n, l\neq p, m\neq n, m\neq p, n\neq p$.
\end{theorem}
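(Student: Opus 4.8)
The plan is to reduce the description of $GS_p(H)$ to a purely local condition and then, for each relevant class $C_i$, to classify the periodic configurations every unit ball of which lies in $C_i$. For fixed $J$ put $I(J)=\{i:J\in A_i\}$, the set of indices realizing $\min\{U_1,\dots,U_{12}\}$. By the definition of a ground state together with the additivity of the relative Hamiltonian (the Lemma giving $H(\sigma,\varphi)=\sum_{b}(U(\sigma_b)-U(\varphi_b))$), a configuration $\varphi$ is a ground state if and only if $\varphi_{b}\in\bigcup_{i\in I(J)}C_i$ for every $b\in M$. Part A is then immediate: at $J=(0,0)$ all twelve values vanish, so $I(J)=\{1,\dots,12\}$, every $\sigma_b$ lies in some $C_i$, and hence $GS(H)=\Omega$.

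For Part B I would first record, from the explicit description of the $A_i$, that on each interior piece $\widetilde{A_i}$ one has $I(J)=\{i\}$: the coincidences $A_3=A_4=A_{10}$ and $A_7=A_8=A_{11}$, together with the pairwise intersections $B,B_0,\dots,B_3$, occur only on the rays and the point removed in forming the tilde sets. Thus on $\widetilde{A_i}$ a periodic ground state is exactly a periodic configuration all of whose balls belong to $C_i$, and since Theorem~\ref{thm1} already supplies the listed family, the remaining task is the converse inclusion. The two extreme cases set the template: an all-$C_1$ configuration is monochromatic on every ball, hence globally constant, yielding exactly $\{\varphi^{(i)}\}$; an all-$C_5$ configuration forces every vertex to differ from all four neighbours, which propagates by parity of $|x|$ to the unique bipartite two-colouring, yielding exactly $\{\varphi_5^{(lm)}\}$.

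The substance of Part B is to run this local-to-global propagation for the remaining classes $C_2,C_6,C_9,C_{12}$. For $C_2$ the ball pattern ``centre agrees with three neighbours and differs from one'' says that the disagreement edges form a perfect matching and that the agreement-components are monochromatic; a finite such component is impossible, since a subtree in which every vertex has internal degree $3$ and one outgoing edge would satisfy $3n=2(n-1)$, so the structure is governed by an index-two quotient, and periodicity excludes the spurious non-periodic colourings (for instance one based on $w_i(x)\bmod 3$, incompatible with any finite-index quotient of $G_3$). The analogues for $C_6$, $C_9$, $C_{12}$ use respectively the index-eight subgroup, the infinite-index $\mathcal H_0$, and the infinite-index $\mathcal U$ of Theorem~\ref{thm1}: in each case one verifies that the per-ball constraint, together with invariance under the relevant subgroup, pins the colouring down to the displayed family (including the forced use of infinitely many spin values for $C_9$ and $C_{12}$).

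Finally, for Part C I would treat the boundary rays and the axis $A_8$. On $B=A_1\cap A_2$ one has $I(J)=\{1,2\}$, so a ground state may a priori use balls of both types; here the point is to show that a \emph{periodic} configuration cannot genuinely interleave a monochromatic pattern with a $(3{+}1)$-pattern, so that $GS_p(H)$ is exactly $\{\varphi^{(i)}\}\cup\{\varphi_2^{(lm)}\}$, and similarly for $B_0,B_1,B_2,B_3$. On $A_8=\{J_1\ge0,\,J_2=0\}$ one computes $I(J)=\{5,7,8,11,12\}$, and I would simply verify membership in $C_5,C_7,C_8,C_{12}$ for each displayed configuration (including the weakly periodic $\xi_7^{(lmn)}$) using the ball-energy bookkeeping already in place. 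The main obstacle throughout is the converse direction of Part B: proving \emph{completeness} of the families, i.e.\ that no exotic periodic ground state escapes the list. This is the delicate step, since one must convert a purely local per-ball pattern into a rigid global coset structure and then invoke the arithmetic of finite quotients of $G_3$ to discard colourings that obey the local rule yet fail periodicity; on the boundary cases one must additionally forbid periodic mixing of two admissible local patterns.
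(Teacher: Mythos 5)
Your existence half coincides with the paper's entire proof: the paper disposes of Theorem~\ref{thm2} in two sentences, declaring Case~A trivial and, for B and C, invoking Theorem~\ref{thm1} to produce, for each class $C_i$ that is minimizing at the given $J$, the corresponding periodic configurations. Your ``if and only if'' reduction to the per-ball condition $\varphi_b\in\bigcup_{i\in I(J)}C_i$ is not something to derive from the Lemma --- it is verbatim the paper's definition of a ground state --- and your Part~A argument is the paper's. Everything you devote to the \emph{converse} inclusion (monochromatic rigidity for $C_1$, the bipartite forcing for $C_5$, the matching/quotient analysis for $C_2$, the exclusion of exotic colourings, the non-mixing claims on $B,B_0,\dots,B_3$) has no counterpart in the paper: the equalities $GS_p(H)=\{\dots\}$ are asserted there without proof, so on the completeness direction you are attempting strictly more than the source, and you correctly identify it as the delicate step.

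That said, your completeness program is only a sketch, and as stated it cannot be carried through, because the notion of periodicity is used inconsistently in the theorem itself. Definition~1 presupposes finitely many cosets $H_1,\dots,H_r$, and your exclusion of the $w_1(x)\bmod 3$ colouring (which is correct: testing $c(nx)=c(x)$ with $n=(a_1a_2)^s$ in a finite-index normal subgroup against $x=(a_2a_1)^m$ forces $s\equiv 2m \pmod 3$ for all $m$, a contradiction) relies essentially on finite index. But the families $\varphi_9^{(lm)}$ and $\varphi_{12}^l$ appearing in B.5, B.6 and C.3--C.6 are periodic only with respect to the infinite-index subgroups $\mathcal{H}_0$ and $\mathcal{U}$; indeed $\varphi_{12}^l$ takes infinitely many values, so it cannot be constant on finitely many cosets at all. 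Under the strict (finite-index) definition these configurations drop out of $GS_p(H)$ and B.5/B.6 become vacuous; under a loose definition admitting arbitrary subgroups, every configuration is $\{e\}$-periodic, and then your mod-$3$ example is a ``periodic'' ground state all of whose balls lie in $C_2$, refuting the claimed equality in B.2. Any proof of the stated equalities must first resolve this dichotomy, which neither you nor the paper does. A second concrete obstruction: your step ``on each $\widetilde{A_i}$ one has $I(J)=\{i\}$'' cannot be read off the printed regions, which contain errors (e.g.\ $U_1=2J_1+6J_6$ is a typo, and $A_{12}=\{J_1\leq 0, J_2\leq 0\}$ is contained in $A_1$, where $U_1=2J_1+6J_2<0=U_{12}$ strictly inside, contradicting (\ref{u})); the minimizing diagram has to be recomputed before your bookkeeping for Parts~B and C.6 is sound.
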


\begin{proof}
Case A is trivial. In cases B and C, for a given configuration $\sigma_b$ for which the energy $U(\sigma_b)$ is minimal, we can use Theorem \ref{thm1} to construct the periodic configurations.
\end{proof}

\begin{remark}
Since the set $\Phi$ is countable, it follows that the periodic and weakly periodic ground states described in Theorem \ref{thm2} are countable.
\end{remark}

\begin{acknowledgement}
The authors thank Professor U.A.Rozikov
for useful discussions.
\end{acknowledgement}

\end{document}